\documentclass[11pt]{amsart}
\usepackage[T1]{fontenc}
\usepackage{lmodern}
\usepackage{microtype}
\usepackage[hidelinks]{hyperref}
\hypersetup{pdftitle={A new upper bound for 7-universal tournaments},
  pdfauthor={Harshit Verma}}

\newtheorem{theorem}{Theorem}
\begin{document}

\title{A new upper bound for 7-universal tournaments}
\author{Harshit Verma}
\address{Department of Computer Science, Yale University, New Haven, CT, USA}
\email{harshit.verma@yale.edu}
\subjclass[2020]{05C20, 05C35}
\keywords{Tournament, universal tournament, induced subtournament}

\begin{abstract}
A tournament is $k$-universal if it contains every tournament on $k$
vertices as a subtournament. We give a $7$-universal tournament on $14$
vertices, improving the upper bound of $15$ due to Zhang and Szeider.
Together with their lower bound, it follows that the smallest $7$-universal
tournament has $13$ or $14$ vertices.
\end{abstract}

\maketitle

\section{Introduction}

A tournament is an orientation of a complete graph. A tournament $T$ is
\emph{$k$-universal} if every tournament on $k$ vertices is isomorphic to a
subtournament of $T$. Let $t(k)$ denote the minimum number of vertices of a
$k$-universal tournament. Universal tournaments were studied by
Moon~\cite[Section~17]{Moon}, who gave general bounds on $t(k)$, and
Alon~\cite[Theorem~4.1]{Alon} proved that $t(k)=(1+o(1))2^{(k-1)/2}$.

Using SAT solvers, Zhang and Szeider~\cite{ZS} showed that $t(k)=1$, $2$,
$4$, $5$, $8$, $10$ for $k=1,\dots,6$, and that $13\le t(7)\le15$. They left
open whether $7$-universal tournaments on $13$ or $14$ vertices
exist~\cite[Section~7.4.2]{ZS}; the same bounds are stated in the
survey~\cite[Section~4.3]{Szeider}. We show that $14$ vertices suffice.

\begin{theorem}\label{thm:main}
There is a $7$-universal tournament on $14$ vertices. Consequently,
$13\le t(7)\le14$.
\end{theorem}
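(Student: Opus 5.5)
The proof will be by explicit construction followed by a computer-checkable verification. Since the lower bound $t(7)\ge 13$ is already established in \cite{ZS}, the only new content is the existence of a single $7$-universal tournament $T$ on $14$ vertices. I would present $T$ by its adjacency matrix, or more compactly as a structured tournament. The natural first candidates are tournaments with a large automorphism group, since symmetry reduces both the search and the verification. Concretely, I would first try a tournament on $\mathbb{Z}_{13}$ with a connection set $S$ satisfying $S\cap(-S)=\emptyset$ and $|S|=6$ (for example the quadratic-residue-like circulants), extended by one extra vertex that dominates one orbit-defined half and is dominated by the rest. Alternatively, I would take a tournament on $\mathbb{Z}_7\times\mathbb{Z}_2$ invariant under a cyclic group of order $7$, or of order $14$ if that is consistent. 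If no symmetric candidate works, I would fall back on a SAT/local search seeded from the $15$-vertex example of \cite{ZS} with one vertex deleted, or on a search over tournaments with an automorphism of order $7$.

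The verification step is to check that every isomorphism class of tournaments on $7$ vertices occurs among the $\binom{14}{7}=3432$ induced $7$-vertex subtournaments of $T$. There are $456$ isomorphism classes of $7$-vertex tournaments. The check proceeds in three stages:
\begin{enumerate}
\item For each $7$-subset, compute a canonical form, for instance the lexicographically minimal adjacency matrix over all $7!$ relabelings, or a certificate from nauty.
\item Collect the resulting set of canonical forms.
\item Confirm that this set has size $456$, and cross-check the count against an independent enumeration of $7$-vertex tournaments.
\end{enumerate}
When $T$ has an automorphism group $G$, only $G$-orbit representatives of $7$-subsets need to be examined. That makes the certificate short enough that a reader could verify it with a small script, and I would include such a script or a table listing, for each class, a witnessing $7$-subset.

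The key obstacle is finding $T$, not verifying it. The margin is thin: $3432$ subsets must realize $456$ classes, and many classes have small automorphism groups, so each must be hit by a suitable subset. Highly regular tournaments, such as the doubly regular ones, tend to produce too few distinct classes because of their high symmetry. In particular, the transitive tournament $TT_7$ and near-transitive classes force $T$ to contain a transitive subtournament on $7$ vertices, while highly symmetric classes force balanced structure, and these pull in opposite directions. I therefore expect the successful $T$ to have only a small symmetry group, say order $7$ or less. To find it I would first run local search over $14$-vertex tournaments, maximizing the number of $7$-vertex classes represented, with moves given by single arc reversals or reversals within an orbit. Once a tournament achieving all $456$ classes is found, I would record it explicitly and give the witness table as the proof.
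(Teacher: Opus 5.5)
The gap is that you never produce the tournament, and for this statement the tournament \emph{is} the proof. The theorem is a bare existence claim certified by computation: you must exhibit a specific $14$-vertex tournament and check that all $456$ isomorphism classes of $7$-vertex tournaments occur among its $\binom{14}{7}=3432$ induced $7$-vertex subtournaments. You describe where you would look: a rotational tournament on the integers mod $13$ plus an extra vertex, a tournament with an automorphism of order $7$, or local search. A search plan establishes nothing until it has succeeded and its output has been written down and checked, and nothing you say shows that any of your candidate families contains a $7$-universal tournament.

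Several of the candidates are shaky on inspection. Since $-1\equiv 12$ is a square mod $13$, the quadratic residues mod $13$ are closed under negation and do not define a tournament at all. A cyclic symmetry of order $14$ is impossible, because tournament automorphism groups have odd order: an involution would swap some pair $u,v$ and reverse the arc between them. If $T$ has a fixed-point-free automorphism of order $7$, its $7$-subsets fall into only $(3432-2)/7+2=492$ orbits, each contributing at most one isomorphism class. That leaves room for only $36$ repeats. The paper supplies what is missing: an explicit adjacency matrix for $T_{14}$, followed by the exhaustive check. That tournament has trivial automorphism group and scores ranging from $4$ to $10$, so it is not of the symmetric kind you propose to try first.

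Your verification procedure is sound and matches the paper's computation. Minimizing over all $7!$ relabelings of each of the $3432$ subsets is exactly the $3432\cdot 5040=17\,297\,280$ injective maps the paper enumerates, and your count of $456$ classes is the paper's independent cross-check. The paper's main argument phrases the same computation as checking that every one of the $2^{21}$ labeled tournaments on $\{1,\dots,7\}$ arises as some $T_\varphi$. This has the small advantage of not relying on the known value $456$.
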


To the best of our knowledge, no $7$-universal tournament on fewer than $15$
vertices was previously known.

\section{The tournament}

Let $T_{14}$ be the tournament on $\{1,\dots,14\}$ whose adjacency matrix
$A$ is given in Table~\ref{tab:matrix}: $A_{ij}=1$ if and only if the edge
between $i$ and $j$ is directed from $i$ to $j$.

\begin{table}[t]
\renewcommand{\arraystretch}{0.9}
\[
A=\begin{pmatrix}
0&1&1&1&0&0&0&1&0&0&1&0&0&0\\
0&0&1&0&0&1&1&1&1&1&0&1&0&1\\
0&0&0&1&0&0&0&1&1&0&1&1&1&0\\
0&1&0&0&0&0&1&1&1&0&0&1&1&1\\
1&1&1&1&0&0&1&1&1&0&1&1&0&1\\
1&0&1&1&1&0&1&0&0&1&1&0&1&0\\
1&0&1&0&0&0&0&1&0&0&0&1&0&0\\
0&0&0&0&0&1&0&0&0&0&1&1&0&1\\
1&0&0&0&0&1&1&1&0&0&1&0&0&1\\
1&0&1&1&1&0&1&1&1&0&0&1&0&1\\
0&1&0&1&0&0&1&0&0&1&0&0&1&1\\
1&0&0&0&0&1&0&0&1&0&1&0&1&0\\
1&1&0&0&1&0&1&1&1&1&0&0&0&0\\
1&0&1&0&0&1&1&0&0&0&0&1&1&0
\end{pmatrix}
\]
\caption{The adjacency matrix of $T_{14}$.}
\label{tab:matrix}
\end{table}

\begin{proof}[Proof of Theorem~\ref{thm:main}]
A tournament on the vertex set $\{1,\dots,7\}$ is determined by the
orientations of its $21$ edges, so there are $2^{21}=2\,097\,152$ such
tournaments, and every tournament on $7$ vertices is isomorphic to one of
them. Each injective map $\varphi\colon\{1,\dots,7\}\to\{1,\dots,14\}$
determines the tournament $T_\varphi$ on $\{1,\dots,7\}$ in which $i\to j$
if and only if $\varphi(i)\to\varphi(j)$ in $T_{14}$, and $T_\varphi$ is
isomorphic to the subtournament of $T_{14}$ induced by the image of
$\varphi$. An exhaustive computation over all
$14\cdot13\cdots8=17\,297\,280$ injective maps shows that every tournament
on $\{1,\dots,7\}$ occurs as some $T_\varphi$. Hence $T_{14}$ is
$7$-universal, and $t(7)\le14$. The lower bound $t(7)\ge13$ is due to
Zhang and Szeider~\cite[Section~7.4.2]{ZS}.
\end{proof}

As an independent check, the $\binom{14}{7}=3432$ subtournaments of $T_{14}$
induced by $7$-element vertex sets fall into $456$ isomorphism classes, the
number of tournaments on $7$ vertices up to isomorphism~\cite[Table~1]{ZS}.
The computation in the proof is carried out by the ancillary file
\texttt{verify.py} on arXiv.

\section{Remarks}

The tournament $T_{14}$ has score sequence
$(4,4,5,5,6,6,6,6,7,7,8,8,9,10)$ and trivial automorphism group. Deleting
any vertex of $T_{14}$ destroys $7$-universality, so $T_{14}$ has no
$7$-universal subtournament on $13$ vertices.

\bibliographystyle{amsplain}
\bibliography{references}

\end{document}